\documentclass[conference,a4paper]{IEEEtran}
\addtolength{\topmargin}{9mm}
\interdisplaylinepenalty=1000 
\usepackage{mleftright}       
\mleftright                   
\usepackage[utf8]{inputenc}
\usepackage[T1]{fontenc}
\usepackage[cmex10]{amsmath}
\usepackage{amssymb}
\usepackage{dsfont}
\usepackage{xcolor}
\usepackage{tabu}
\usepackage{adjustbox}
\usepackage{multirow}
\usepackage{amsthm}
\usepackage{adjustbox}
\usepackage{verbatim}
\usepackage{nicematrix}
\usepackage{tablefootnote}
\usepackage{floatrow}
\usepackage[maxbibnames=99,style=ieee,sorting=nyt,citestyle=numeric-comp]{biblatex}

\usepackage{scrextend}

\makeatletter
\let\MYcaption\@makecaption
\makeatother

\usepackage[font=footnotesize]{subcaption}

\makeatletter
\let\@makecaption\MYcaption
\makeatother

\floatsetup[table]{capposition=top}

\bibliography{bibliography.bib}

\title{Nearest Neighbor Representations of Neurons}
\author{%
   \IEEEauthorblockN{\textbf{Kordag Mehmet Kilic}\IEEEauthorrefmark{1}, \textbf{Jin Sima}\IEEEauthorrefmark{2} and \textbf{Jehoshua Bruck}\IEEEauthorrefmark{1}}
   \IEEEauthorblockA{\IEEEauthorrefmark{1}%
   Electrical Engineering, California Institute of Technology, USA, \texttt{\{kkilic,bruck\}@caltech.edu}
   }
   \IEEEauthorblockA{\IEEEauthorrefmark{2}%
   Electrical and Computer Engineering, University of Illinois Urbana-Champaign, USA, \texttt{jsima@illinois.edu}
   }
 }

\newtheorem{theorem}{Theorem}
\newtheorem{corollary}{Corollary}[theorem]

\newtheorem{definition}{Definition}

\DeclareMathOperator {\diag}{diag}

\usepackage{tikz}[border=2mm]
\usepackage{tikz-3dplot}
\usepackage{float}
\usetikzlibrary{shapes.geometric,positioning,shapes.gates.logic.US,tikzmark,calc,ext.transformations.mirror}
\AtBeginBibliography{\small}
\interdisplaylinepenalty=2500 

\hyphenation{op-tical net-works semi-conduc-tor}

\tikzset{
    -,
    gate/.style={draw=black,fill=#1,minimum width=5mm,circle},
    square/.style={minimum width=6mm,regular polygon,regular polygon sides=4},
    triangle/.style={minimum width=4mm,regular polygon, regular polygon sides=3},
    every pin edge/.style={draw=black},
    GateCfg/.style={
            logic gate inputs={normal,normal,normal},
            draw,
            scale=2
        }
}

\begin{document}

\maketitle
\thispagestyle{plain}
\pagestyle{plain}
\pagenumbering{gobble}

\begin{abstract}
    The Nearest Neighbor (NN) Representation is an emerging computational model that is inspired by the brain. We study the complexity of representing a neuron (\textit{threshold function}) using the NN representations. It is known that two anchors (the points to which NN is computed) are sufficient for a NN representation of a threshold function, however, the resolution (the maximum number of bits required for the entries of an anchor) is $O(n\log{n})$. In this work, the trade-off between the number of anchors and the resolution of a NN representation of threshold functions is investigated. We prove that the well-known threshold functions EQUALITY, COMPARISON, and ODD-MAX-BIT, which require 2 or 3 anchors and resolution of $O(n)$, can be represented by polynomially large number of anchors in $n$ and $O(\log{n})$ resolution. We conjecture that for all threshold functions, there are NN representations with polynomially large size and logarithmic resolution in $n$.
\end{abstract}

\section{Introduction}
\label{sec:intro}

The human brain has fascinated researchers for many decades in terms of its capabilities and architecture. It consists of a network of simple computational elements, namely, \textit{neurons}. 
A considerable effort has been made in the 1940s and 1950s to model a single neuron biologically and mathematically \cite{hodgkin1952quantitative,mcculloch1943logical,rosenblatt1958perceptron}. The motivation was to model the human brain by constructing networks of the mathematical model of neurons, called \textit{perceptrons}, hence, constructing \textit{neural networks} \cite{minsky2017perceptrons}. After many years of scientific progress, neural networks are demonstrated to be useful in many applications including machine learning, pattern recognition, large language models, and forecasting. It is remarkable that the architecture of the human brain inspired practical solutions to many contemporary challenges and it further motivated many to understand the mathematical limits of these models.

We focus on a relatively new model of the brain called \textit{Nearest Neighbor (NN) Representations} \cite{hajnal2022nearest,kilic2023information}. In this model, one considers concepts as embeddings of vectors in $\mathbb{R}^n$ where arbitrary vectors are explained by the related \textit{nearest neighbors}, called \textit{anchors}, that constitute the NN representation. Even though one can consider each concept as an anchor and perfectly fit the representation without any error, the challenge is to obtain smallest size NN representations. By analogy to language, to present the concept of a `dog', we do not need to represent all dogs. Namely, the goal is to succinctly cluster concepts around the set of anchors. Given the concepts, the complexity of the associated NN representation is defined as the minimal number of anchors.


\begin{figure}[h!]
    \centering
    \begin{floatrow}
    \ffigbox[\FBwidth]
    {
    \scalebox{0.75}{
        \begin{tikzpicture}
        
        \draw[->] (-1,0) -- (3.25,0) node[right] {$X_1$}; 
        \draw[->] (0,-1) -- (0,3.25) node[above] {$X_2$};
        
        \newcommand{\comma}{,}
        \coordinate (c00) at (0,0);
        \coordinate (c01) at (0,2);
        \coordinate (c10) at (2,0);
        \coordinate (c11) at (2,2);
        
        \node[gate,square,color=black,label=225:$(0\comma0)$] (a00) at (0,0) {};
        \node[gate,square,color=black,label=180:$(0\comma1)$] (a01) at (0,2) {};
        \node[gate,square,color=black,,label=270:$(1\comma0)$] (a10) at (2,0) {};
        \node[gate,triangle,color=black,,label=45:$(1\comma1)$] (a11) at (2,2) {};
        
        \draw[dashed] (a01|-a11) -- (a11) -- (a11|-a10);
        
        \node[gate,minimum width=3mm,color=red,label=0:$\textcolor{red}{a_1}$] (anchor-1) at (1,1) {};
        
        \node[gate,minimum width=4.2mm,fill=white,draw=white] (anchor-2) at (2,2) {};
        \node[gate,minimum width=3mm,fill=blue,label=0:$\textcolor{blue}{a_2}$] (anchor-2) at (2,2) {};
        
        \coordinate (cy) at (-0.25,3.25);
        \coordinate (cx) at (3.25,-0.25);
        \draw[-] (cy) -- (cx);
        
        \end{tikzpicture}
}
\label{fig:and_coor}

}

{
\scalebox{0.75}{
        \begin{tikzpicture}
        
        \draw[->] (-1,0) -- (3.25,0) node[right] {$X_1$}; 
        \draw[->] (0,-1) -- (0,3.25) node[above] {$X_2$};
        
        \newcommand{\comma}{,}
        \coordinate (c00) at (0,0);
        \coordinate (c01) at (0,2);
        \coordinate (c10) at (2,0);
        \coordinate (c11) at (2,2);
        
        \node[gate,square,color=black,label=225:$(0\comma0)$] (a00) at (0,0) {};
        \node[gate,triangle,color=black,label=180:$(0\comma1)$] (a01) at (0,2) {};
        \node[gate,triangle,color=black,,label=270:$(1\comma0)$] (a10) at (2,0) {};
        \node[gate,square,color=black,,label=45:$(1\comma1)$] (a11) at (2,2) {};
        
        \draw[dashed] (a01|-a11) -- (a11) -- (a11|-a10);
        
        \node[gate,minimum width=0.5mm,color=red,label=315:$\textcolor{red}{a_1}$] (anchor-1) at (0,0) {};
        
        \node[gate,minimum width=0.5mm,color=blue,label=0:$\textcolor{blue}{a_2}$] (anchor-2) at (1,1) {};
        
        \node[gate,minimum width=0.5mm,color=red,label=0:$\textcolor{red}{a_3}$] (anchor-3) at (2,2) {};
        
        \coordinate (cy) at (-0.25,3.25);
        \coordinate (cx) at (3.25,-0.25);
        \draw (cy) -- (cx);
        \coordinate (dy) at (-1,2);
        \coordinate (dx) at (2,-1);
        \draw (dy) -- (dx);
    \end{tikzpicture}
    }
}
\end{floatrow}
    \caption{NN representations for $2$-input Boolean functions $\text{AND}(X_1,X_2)$ (left) and $\text{XOR}(X_1,X_2)$ (right). Triangles denote $f(X) = 1$ and squares denote $f(X) = 0$. It can be seen that \textbf{\textcolor{red}{red}} anchors are closest to squares and  \textcolor{blue}{blue} anchors are closest to triangles. Separating lines between anchors pairs are drawn.}
    \label{fig:and-xor}
\end{figure}
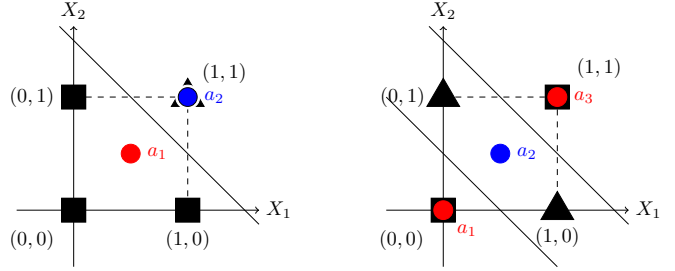

Following the previous works \cite{hajnal2022nearest,kilic2023information}, we focus on the case where each concept is a binary vector $\{0,1\}^n$ and there are only two labels, \textbf{\textcolor{red}{red}} and \textcolor{blue}{blue}. Alternatively, this corresponds to the representations of \textit{Boolean functions}. In this work, we denote real vectors by small letters and binary vectors by capital letters, that is, $x \in \mathbb{R}^n$ or $X \in \{0,1\}^n$ respectively. We also note that anchor entries can be rational numbers. Examples of NN representations for $2$-input Boolean functions AND and XOR are given in Figure \ref{fig:and-xor}.

Now, we introduce some definitions to rigorously represent the NN framework. Let $d(a,b)$ denote the Euclidean distance between two vectors $a,b \in \mathbb{R}^n$. 

\begin{definition}
    The \textup{Nearest Neighbor (NN) Representation} of a Boolean function $f$ is a set of anchors consisting of the disjoint subsets $(P,N)$ of $\mathbb{R}^n$ such that for every $X \in \{0,1\}^n$ with $f(X) = 1$, there exists an anchor $p \in P$ such that for every anchor $n \in N$, $d(X,p) < d(X,n)$, and vice versa. The \textup{size} of the NN representation is $|P \cup N|$.
\end{definition}

\begin{definition}
    The \textup{Nearest Neighbor Complexity} of a Boolean function $f$ is the minimum size over all NN representations of $f$, denoted by $NN(f)$.
\end{definition}

The \textit{resolution of a NN representation} is defined by the maximum number of bits required among the entries of all anchors. This allows us to quantify the \textit{information} required to represent a NN representation of a Boolean function. We write all anchors row-by-row to an \textit{anchor matrix} $A \in \mathbb{Q}^{m\times n}$ for a representation of size $m$.

\begin{definition}
    The \textup{resolution ($RES$) of a rational number} $a/b$ is $RES(a/b) = \lceil \max\{\log_2{|a+1|},\log_2{|b+1|}\}\rceil$ where $a,b \in \mathbb{Z}$, $b \neq 0$, and they are coprime.
    
    For a matrix $A \in \mathbb{Q}^{m\times n}$, $RES(A) = \max_{i,j} RES(a_{ij})$. The \textup{resolution of an NN representation} is $RES(A)$ where $A$ is the corresponding anchor matrix.
\end{definition}

For the XOR function in Figure \ref{fig:and-xor}, we see that the anchors are $\textcolor{red}{a_1} = (0,0)$, $\textcolor{blue}{a_2} = (0.5,0.5)$, and $\textcolor{red}{a_3} = (1,1)$. This representation is in fact optimal in size and the resolution is $\lceil \log_2{3} \rceil = 2$ bits.

Could we represent the brain by NN representations? If so, what would the required resolution be? Namely, we are interested in the NN representations of neural networks under different resolution constraints. In general, such representations may not be practical to analyze albeit possible. Nevertheless, to answer this question, it is fundamental to understand how a single neuron could be represented using nearest neighbors.

For the mathematical model of a neuron, we focus on \textit{threshold functions}. A \textit{linear threshold function} is a weighted summation of binary inputs fed to a step function, namely, $\mathds{1}\{w^T X \geq b\}$ where $w$ is an integer weight vector, $X$ is a binary vector, $b$ is the bias term, and $\mathds{1}\{.\}$ is an indicator function with $\{0,1\}$ outputs. Similarly, an \textit{exact threshold function} is defined as $\mathds{1}\{w^T X = b\}$.

Threshold functions are of interest because they were defined as perceptrons in the classical work on neural networks \cite{minsky2017perceptrons}. When we consider NN representations of threshold functions, we see that $NN(f) = 2$ for linear threshold functions except the constant functions $f(X) = 0$ and $f(X) = 1$, which have $NN(f) = 1$ \cite{hajnal2022nearest,kilic2023information}. Conversely, any two anchor NN representation implies a linear threshold function. In Figure \ref{fig:nn_lt}, the geometrical idea is depicted.

\begin{theorem}[\cite{kilic2023information}]
    \label{th:2-anchor}
    Let $f(X)$ be $n$-input non-constant linear threshold function with weight vector $w \in \mathbb{Z}^n$ and a threshold term $b \in \mathbb{Z}$. Then, there is a $2$-anchor NN representation of $f(X)$ with resolution $O(RES(w))$. In general, the resolution is $O(n\log{n})$.
\end{theorem}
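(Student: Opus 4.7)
The plan is to place both anchors along the line spanned by $w$ and tune their positions so that the perpendicular bisector of the two anchors coincides with the shifted decision hyperplane $w^T X = b - \tfrac{1}{2}$.

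First I would start from the standard squared-distance identity $d(X,p)^2 - d(X,n)^2 = \|p\|^2 - \|n\|^2 - 2 X^T(p-n)$ and substitute the ansatz $p = \mu_P w$, $n = \mu_N w$ with two scalars $\mu_P > \mu_N$. The right-hand side collapses to $(\mu_P - \mu_N)\bigl((\mu_P + \mu_N)\|w\|^2 - 2 w^T X\bigr)$, so $d(X,p) < d(X,n)$ is equivalent to $w^T X > \tfrac{1}{2}(\mu_P + \mu_N)\|w\|^2$.

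Second, since $w \in \mathbb{Z}^n$ and $X \in \{0,1\}^n$ force $w^T X \in \mathbb{Z}$, the threshold condition $w^T X \geq b$ is equivalent to $w^T X > b - \tfrac{1}{2}$. I would match the two strict inequalities by choosing $\mu_P + \mu_N = (2b-1)/\|w\|^2$. For a concrete, symmetric choice that keeps $\mu_P > \mu_N$ regardless of the sign of $b$, take $\mu_P = (2b-1)/(2\|w\|^2) + 1$ and $\mu_N = (2b-1)/(2\|w\|^2) - 1$, so that $\mu_P - \mu_N = 2$. Declaring $P = \{\mu_P w\}$ and $N = \{\mu_N w\}$ then yields a valid $2$-anchor NN representation.

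Third, I would bound the resolution. Every entry of either anchor has the form $\bigl((2b-1) \pm 2\|w\|^2\bigr) w_i / (2\|w\|^2)$. The denominator is at most $2\|w\|^2 \leq 2n \max_i w_i^2$, and the numerator is $O(n \max_i |w_i|^3)$ once we use the fact that a non-constant LTF satisfies $|b| \leq n \max_i |w_i|$ (since $w^T X$ ranges over that interval on $\{0,1\}^n$). Hence the resolution is $O(\log n + RES(w))$, which is $O(RES(w))$ whenever $RES(w) = \Omega(\log n)$. The ``in general'' claim of $O(n \log n)$ then follows from Muroga's theorem, which guarantees that every $n$-variable LTF admits an integer weight representation with $RES(w) = O(n \log n)$.

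There is no deep obstruction here; the argument is essentially one algebraic calculation followed by a resolution audit. The one subtlety worth flagging is the mismatch between the strict NN inequality $d(X,p) < d(X,n)$ and the non-strict threshold inequality $w^T X \geq b$, which is precisely what forces the $\tfrac{1}{2}$ shift in the target bisector and requires $w$ to have integer entries so that the shift is safe.
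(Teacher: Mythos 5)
Your construction is essentially the paper's: two anchors of opposite label placed symmetrically along the direction $w$ about a point of the (half-integer-shifted) separating hyperplane, so that the perpendicular bisector realizes the threshold; the paper only cites this theorem and reuses exactly this $X^* \pm cw$ picture in its proof of Theorem \ref{th:3-anchor}. The algebra and the handling of the strict-versus-non-strict inequality via the $b-\tfrac{1}{2}$ shift are correct, and the representation is valid. The one substantive caveat is the resolution: by centering the anchors at $\frac{2b-1}{2\|w\|_2^2}\,w$ you introduce the denominator $2\|w\|_2^2$ and hence an additive $O(\log n)$ term, giving $O(RES(w)+\log n)$ rather than the stated $O(RES(w))$ (e.g.\ for $w=(1,\dots,1)$ your anchors have resolution $\Theta(\log n)$ while $RES(w)=O(1)$); you flag this yourself but leave it unresolved. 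To recover the exact bound, center the anchors instead at a point $X^*\in[0,1]^n$ on $w^Tx=b-\tfrac12$ chosen greedily so that all but one coordinate is binary and the remaining one is a fraction with denominator $w_j$; such a point exists precisely because $f$ is non-constant, and it has resolution $O(RES(w))$. The $O(n\log n)$ general claim is unaffected either way.
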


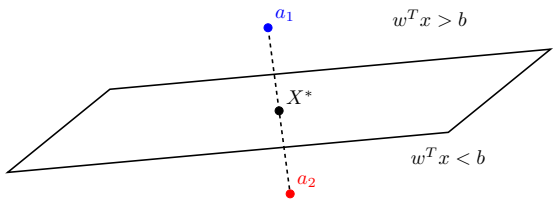
\begin{figure}[h!]
    \centering
    \scalebox{0.75}{
    \tdplotsetmaincoords{70}{110}
        \begin{tikzpicture}[tdplot_main_coords]
            \newcommand{\eq}{=}
            \tdplotsetrotatedcoords{35}{-30}{0}
            \begin{scope}[tdplot_rotated_coords]
            \draw[thick,-] (-4,-4,0) -- (-4,4,0) -- (4,4,0) -- (4,-4,0) -- cycle;
            
            \coordinate (a0) at (0,0,0);

            \node[label=0:$w^T x > b$] (w1) at ($(a0) + (0,2,1.5)$) {};
            \node[label=0:$w^T x < b$] (w3) at ($(a0) + (0,2,-1)$) {};

            \filldraw[black] (a0) circle (2pt) node[above right]  {$X^*$};

            \coordinate (a1) at ($(a0) + (0,0,1.5)$);
            \coordinate (a2) at ($(a0) + (0,0,-1.5)$);
            \draw[thick,-,dash pattern=on 2pt off 2pt](a1)--(a2);
            
            \filldraw[blue] (a1) circle (2pt) node[above right]  {$a_{1}$};
            \filldraw[red] (a2) circle (2pt) node[above right]  {$a_{2}$};
            \end{scope}
        \end{tikzpicture}
    }
    \caption{The NN Representation of a linear threshold function $\mathds{1}\{w^T X \geq b\}$ and its $2$-anchor NN Representation. $X^*$ can be any point in the hyperplane.}
    \label{fig:nn_lt}
\end{figure}

In general, the weights of a threshold function need to be exponentially large, each requiring $O(n\log{n})$ resolution (so that $|w_i| < 2^{O(n\log{n})}$) to be represented \cite{alon1997anti,babai2010weights,haastad1994size,muroga1971threshold}. In contrast, it is reasonable to assert that the brain does not use exponentially ``large'' numbers in its intrinsic network of neurons and to hypothesize that the complex connectivity provides the expressive power. This idea is motivated by the fact that weights in neural networks can be interpreted as the number of synaptic connections among neurons, which cannot be exponentially large. It has been shown that a depth-2 threshold circuit with polynomially large weights in $n$ can be used to compute any threshold function with exponentially large weights in $n$ \cite{amano2005complexity,goldmann1993simulating,hansen2010exact,hofmeister1996note}. We say that the resolution is ``high'' if it is linear in $n$ and that the resolution is ``low'' if it is logarithmic in $n$.

By Theorem \ref{th:2-anchor}, we see that the resolution depends on the weight size of the threshold function, which is exponentially large in $n$. In the light of our hypothesis that the brain does not use ``large'' numbers, we try to prove an analogous result to the result in the context of depth-2 threshold circuits. In other words, our goal is to show that it is possible to increase the number of anchors to a polynomial quantity and decrease the resolution to logarithmic quantities for an arbitrary threshold function. This was posted as an open problem in our previous paper \cite{kilic2023information}. In this paper, we prove that it is true for the threshold functions EQ, COMP, and OMB. However, the general question is still open.

\begin{definition}
    The $2n$-input \textup{EQUALITY} (denoted by $\textup{EQ}_{2n}$) function is an exact threshold function checking if two unsigned $n$-bit integers $X$ and $Y$ are equal. 
    \begin{align}
        \label{eq:equ}
        \textup{EQ}_{2n}(X,Y) &= \mathds{1}\Big\{X = Y\Big\} \\
        \label{eq:equ_equ}
        &= \mathds{1}\Big\{\sum_{i=1}^n 2^{i-1}X_i = \sum_{i=1}^n 2^{i-1}Y_i\Big\}
    \end{align}
    where $X_i$ and $Y_i$ are the binary expansions of $X$ and $Y$.
    
    The $2n$-input \textup{COMPARISON} (denoted by $\textup{COMP}_{2n}$) is the linear threshold function counterpart of the EQ where $\textup{COMP}_{2n}(X,Y) = \mathds{1}\{X \geq Y\}$.
\end{definition}

\begin{definition}
    The $n$-input \textup{ODD-MAX-BIT} (denoted by $\textup{OMB}_n$) function is a linear threshold function checking if the index of the leftmost $1$ of a binary vector $X = (X_1,\dots,X_n)$ has an odd index or not.
    \begin{align}
    \label{eq:omb}
    \textup{OMB}_n(X) &= \mathds{1}\Big\{\sum_{i=1}^n (-1)^{i-1} 2^{n-i} X_i > 0 \Big\}
\end{align}
\end{definition}

We first show that any exact threshold function has at most a $3$-anchor NN representation with resolution $O(n\log{n})$ to conclude the treatment of the NN representations of threshold functions with arbitrary resolution. Then, we give constructions of the low resolution NN representations for the EQ, COMP, and OMB. We believe that our results could provide insights to solve the general open problem, namely, if an arbitrary threshold function has NN representations with polynomially large number of anchors and logarithmic resolution in $n$. Table \ref{tab:nn} presents the summary of our results.

\begin{table}[h!]
\caption{The Results for Low Resolution NN Representations}
\label{tab:nn}
\begin{minipage}{\textwidth}
\centering
\begin{adjustbox}{width=0.65\textwidth}
\begin{tabular}{|c||c||c|}
    \hline
     \multirow{2}{*}{\textbf{Function}} & \multicolumn{2}{c|}{\textbf{NN Representation}} \\ 
     \cline{2-3} & Size & Resolution \\
     \hline\hline
     \hline
     $\text{EQ}_{2n}$ & $2n + 1$ & $O(1)$ \\
     \hline
     $\text{EQ}_{2n}$ & $O(n/\log{n})$ & $O(\log{n})$ \\
     \hline
     $\text{COMP}_{2n}$ & $2n$ & $O(\log{n})$ \\
     \hline
     $\text{OMB}_{n}$ & $n+1$ & $O(\log{n})$\\
     \hline
\end{tabular}
\end{adjustbox}
\end{minipage}
\end{table}

\section{NN Representations of Threshold Functions with High Resolution}
\label{sec:lt}

In addition to the result in Theorem \ref{th:2-anchor}, we give a similar result for exact threshold functions for completeness. For any exact threshold function $f$ which is not a linear threshold function, it is $\text{NN}(f) > 2$ necessarily. For example, the EQ is such an example and the AND function is both exact and linear. We show that any exact threshold function which is not linear has exactly a $3$-anchor NN representation where all the anchors are collinear. The geometrical idea is given in Figure \ref{fig:nn_elt}. Contrary to linear threshold functions, the converse does not hold, i.e, a Boolean function with a $3$-anchor NN representation need not to be an exact threshold function (see \textit{$3$-interval symmetric Boolean functions} in \cite{kilic2023information}). The resolution of the weights is related to the resolution of the anchors similar to the result in Theorem \ref{th:2-anchor}.

\begin{theorem}
    \label{th:3-anchor}
    Let $f(X)$ be an $n$-input exact threshold function with weight vector $w \in \mathbb{Z}^n$ and the bias term $b \in \mathbb{Z}$. Then, there is a $3$-anchor NN representation of $f(X)$ with resolution $O(RES(||w||_2^2))$ with $||.||_2$ being the Euclidean norm. In general, the resolution is $O(n\log{n})$.
\end{theorem}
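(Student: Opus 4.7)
The plan is to place the three anchors collinearly along the normal direction $w$ of the hyperplane $\{x : w^T x = b\}$: a single \textcolor{red}{red} anchor $a_1$ on the hyperplane itself, and two \textcolor{blue}{blue} anchors $a_2, a_3$ symmetrically offset by $\pm w / ||w||_2^2$. Concretely, I would set
\[
a_1 = \frac{b}{||w||_2^2}\, w, \qquad a_2 = \frac{b+1}{||w||_2^2}\, w, \qquad a_3 = \frac{b-1}{||w||_2^2}\, w,
\]
with $P = \{a_1\}$ and $N = \{a_2, a_3\}$. All three anchors are collinear by construction, matching the geometric picture invoked before the theorem.

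Correctness would then follow from a direct squared-distance computation. Expanding yields
\[
d(X, a_2)^2 - d(X, a_1)^2 = \frac{1 - 2(w^T X - b)}{||w||_2^2}, \qquad d(X, a_3)^2 - d(X, a_1)^2 = \frac{1 + 2(w^T X - b)}{||w||_2^2},
\]
so both differences depend only on the integer $w^T X - b$. Because $w^T X - b \in \mathbb{Z}$, the additive shift of $1$ in the numerators is just tight enough: $a_1$ is strictly closer than $a_2$ iff $w^T X \leq b$, and strictly closer than $a_3$ iff $w^T X \geq b$. Both hold simultaneously iff $w^T X = b$, i.e.\ iff $f(X) = 1$. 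Conversely, if $w^T X > b$ then $a_2$ beats $a_1$, while if $w^T X < b$ then $a_3$ beats $a_1$, so on $f^{-1}(0)$ some anchor in $N$ always strictly beats $a_1$, giving a valid NN representation.

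For the resolution bound, each anchor entry has the form $(c\, w_i)/||w||_2^2$ with $c \in \{b-1, b, b+1\}$, and $|b| \leq ||w||_1$ whenever $f$ is non-trivial. Both numerator and denominator are integers of magnitude $O(n \cdot ||w||_2^2)$, so after reducing to coprime form, $RES(A) = O(RES(||w||_2^2))$ up to an additive $\log n$ that is absorbed in the big-$O$. Substituting the standard worst-case bound $|w_i| \leq 2^{O(n \log n)}$ for threshold weights gives $||w||_2^2 \leq 2^{O(n \log n)}$, and hence the general bound $O(n \log n)$.

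The main subtlety is calibrating the offset magnitude $1/||w||_2^2$: it must be small enough that no \emph{nonzero} integer value of $w^T X - b$ is mistakenly captured by $a_1$, yet large enough that the anchor entries remain rationals with denominator at most $||w||_2^2$. The integer shifts $b \pm 1$ in the numerators thread this needle simultaneously, and beyond this single calibration choice I do not anticipate any serious obstacle.
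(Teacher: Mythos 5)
Your construction is correct and is essentially the paper's own argument: three collinear anchors along the direction $w$, the middle one on the hyperplane labeled for $w^T X = b$ and the outer two offset by $\pm w/||w||_2^2$, with validity following from the squared-distance differences being affine in the integer $w^T X - b$ so that the shift of $1$ separates $t=0$ from $|t|\geq 1$. The only difference is cosmetic: the paper centers at a binary solution $X^*$ of $w^T x = b$ (arguing one exists unless $f$ is constant), while you center at the projection $\frac{b}{||w||_2^2}w$ of the origin; your resolution accounting then needs the non-triviality bound $|b| \leq ||w||_1$, but both routes land at $O(RES(||w||_2^2))$.
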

\begin{proof}
    We follow the idea in the proof of Theorem \ref{th:2-anchor}. In addition to $a_1$ and $a_2$, we put another anchor $a_0$ which a solution to $w^T X = b$. $a_0$ have the opposite labeling of the $a_1$ and $a_2$. We assume that $c > 0$.
    \begin{align}
        \textcolor{blue}{a_0} &= X^*\\
        \textcolor{red}{a_1} &= X^* - cw \\
        \textcolor{red}{a_2} &= X^* + cw
    \end{align}
    Without loss of generality, we can assume that there exist a binary $X^*$ such that $w^T X^* = b$ because otherwise, $f(X) = 0$ for all $X \in \{0,1\}^n$, which is a constant function with a trivial $1$-anchor NN representation. We have the following necessary and sufficient conditions to claim that this representation is valid indeed:
    \begin{align}
        \textbf{\underline{Case 1}: } w^T X &= b \Leftrightarrow d(a_0,X) < d(a_i,X) \text{ for } i = 1,2\\
        \textbf{\underline{Case 2}: } w^T X &\neq b \Leftrightarrow d(a_0,X) > d(a_i,X) \text{ either for } i = 1,2
    \end{align}
    We get the following when we expand the squared Euclidean distance from the input vector to an anchor:
    \begin{align}
        d(a_1,X)^2 &= |X| - 2a_1^T X + ||a_1||_2^2 \\ \nonumber
                &= |X| - 2X^T X^* + ||X^*||_2^2 \\ 
                &\hphantom{aaaa}+ 2c(w^T X - b) + c^2 ||w||_2^2\\
                &= d(a_0,X)^2 + 2c(w^T X - b) + c^2 ||w||_2^2 \\
        d(a_2,X)^2 &= d(a_0,X)^2 - 2c(w^T X - b) + c^2 ||w||_2^2
    \end{align}
    It is clear that $\min_{i \in \{1,2\}} d(a_i,X)^2 = d(a_0,X)^2 - 2c|w^T X - b| + c^2 ||w||_2^2$. When we write the necessary and sufficient conditions more explicitly, we obtain
    \begin{align}
        \textbf{\underline{Case 1}: } w^T X &= b \Leftrightarrow 0 < -2c|w^T X - b| + c^2 ||w||_2^2\\
        \textbf{\underline{Case 2}: }  w^T X &\neq b \Leftrightarrow 0 > -2c|w^T X - b| + c^2 ||w||_2^2
    \end{align}
    Case 1 is trivial as long as $c \neq 0$ because $w^T X = b$. For Case 2, we have $c < \frac{2|b - w^T X|}{||w||_2^2}$. The minimum value the numerator can take is $1$, therefore, the bound is tightest when $c < \frac{2}{||w||_2^2}$. Taking $c = \frac{1}{||w||_2^2}$ suffices and we immediately see that $RES(A) = O(RES(||w||_2^2))$ given that $X^*$ is binary. In \cite{babai2010weights}, since $RES(w) = O(n\log{n})$, we obtain the $RES(A) = O(n\log{n})$.
\end{proof}

Compared to Theorem \ref{th:2-anchor}, the resolution is upper bounded by the norm of the weights in Theorem \ref{th:3-anchor}. When the weights are constant in $n$, the norm of the weights might be linear in $n$ and the resolution may not be constant anymore. Geometrically speaking, this is due to the fact that if the anchors $a_1$ and $a_2$ are too far from the hyperplane, $a_0$ might get closer to binary vectors outside the hyperplane itself.

\begin{figure}[h!]
    \centering
    \scalebox{0.75}{
    \tdplotsetmaincoords{70}{110}
        \begin{tikzpicture}[tdplot_main_coords]
            \newcommand{\eq}{=}
            \tdplotsetrotatedcoords{35}{-30}{0}
            \begin{scope}[tdplot_rotated_coords]
            \draw[thick,-] (-4,-4,0) -- (-4,4,0) -- (4,4,0) -- (4,-4,0) -- cycle;
            
            \coordinate (a0) at (0,0,0);

            \node[label=0:$w^T x > b$] (w1) at ($(a0) + (0,2,1.5)$) {};
            \node[label=0:$w^T x \eq b$] (w2) at ($(a0) + (0,2,0.25)$) {};
            \node[label=0:$w^T x < b$] (w3) at ($(a0) + (0,2,-1)$) {};

            \coordinate (a1) at ($(a0) + (0,0,1.5)$);
            \coordinate (a2) at ($(a0) + (0,0,-1.5)$);
            \draw[thick,-,dash pattern=on 2pt off 2pt](a1)--(a2);
            
            \filldraw[red] (a1) circle (2pt) node[above right]  {$a_{1}$};
            \filldraw[red] (a2) circle (2pt) node[above right]  {$a_{2}$};

            \filldraw[blue] (a0) circle (2pt) node[above right]  {$a_0$};
            \end{scope}
        \end{tikzpicture}
    }
    \caption{The NN Representation of an Exact Threshold Function $\mathds{1}\{w^T X = b\}$ and its $3$-anchor NN Representation. The anchors $a_1$ and $a_2$ must be close enough to the hyperplane. All anchors are collinear.}
    \label{fig:nn_elt}
\end{figure}
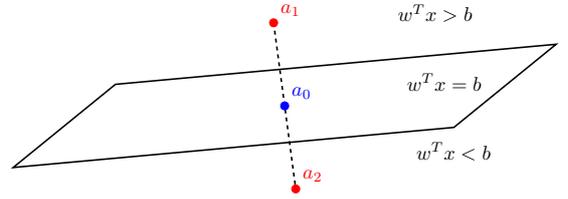

\section{Low Resolution NN Representations for EQ}

A circuit theoretic result shows that there are constant weight \text{EQ matrices} that can be used to compute the EQ function to optimize the number of the threshold gates \cite{kilic2022algebraic}. We now show that one can use any EQ matrix to reduce the number of anchors in a similar way. In our analysis, we use $\mathds{1}$ to denote the all-one vector, $||.||_2$ to denote Euclidean distance, and $\diag(AA^T) \in \mathbb{Z}^m$ to denote squared Euclidean norm of each row of an anchor matrix $A \in \mathbb{Z}^{m\times n}$.

\begin{definition}
A matrix $A \in \mathbb{Z}^{m\times n}$ is an $\textup{EQ matrix}$ if the homogeneous system $Ax = 0$ has no non-trivial solutions in $\{-1,0,1\}^n$.
\end{definition}

\begin{theorem}
    \label{th:eq_log}
    Consider an EQ matrix $A \in \mathbb{Z}^{m\times n}$ with no all-zero rows. Then, there is a $2m+1$-anchor NN representation for the $2n$-input \textup{EQ} function with $O(RES(\diag(AA^T)))$
\end{theorem}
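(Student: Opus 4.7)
The plan is to generalize the (commented-out) identity-matrix construction by replacing the rows of $I_n$ with the rows of the EQ matrix $A$, and to reuse the idea behind the $3$-anchor Theorem \ref{th:3-anchor} on each such row separately. Writing $a_i^T$ for the $i$-th row of $A$, I would introduce the lifted vectors $v_i = (a_i, -a_i) \in \mathbb{Z}^{2n}$, chosen so that on any input $(X,Y) \in \{0,1\}^{2n}$ one has $v_i^T(X,Y) = a_i^T(X-Y)$, together with a central point $X^* = \tfrac{1}{2}\mathds{1}_{2n}$ that satisfies $v_i^T X^* = 0$ for every $i$. The proposed $2m+1$ anchors would then be
\begin{align*}
    \textcolor{blue}{a_0} &= X^*,\\
    \textcolor{red}{a_{2i-1}} &= X^* + c\,v_i,\\
    \textcolor{red}{a_{2i}} &= X^* - c\,v_i,
\end{align*}
for $i = 1,\dots,m$ and a small scalar $c > 0$ chosen at the end.

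Expanding the squared Euclidean distances exactly as in the proof of Theorem \ref{th:3-anchor}, and using $\|v_i\|_2^2 = 2\|a_i\|_2^2$, I would obtain
\begin{align*}
    d(a_{2i-1},(X,Y))^2 - d(a_0,(X,Y))^2 &= -2c\,a_i^T(X-Y) + 2c^2\|a_i\|_2^2,\\
    d(a_{2i},(X,Y))^2 - d(a_0,(X,Y))^2 &= 2c\,a_i^T(X-Y) + 2c^2\|a_i\|_2^2,
\end{align*}
so that $\min_{j \in \{2i-1,2i\}} d(a_j,(X,Y))^2 - d(a_0,(X,Y))^2 = -2c|a_i^T(X-Y)| + 2c^2\|a_i\|_2^2$. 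Two cases then finish the correctness argument. If $X = Y$, each such minimum equals $2c^2\|a_i\|_2^2 > 0$ (this is where the no-all-zero-row hypothesis is used), so $a_0$ strictly beats every negative anchor, matching $\textup{EQ}_{2n}(X,Y)=1$. If $X \neq Y$, then $X-Y \in \{-1,0,1\}^n \setminus \{0\}$, and the EQ matrix property provides some $i^*$ with $a_{i^*}^T(X-Y) \neq 0$, hence $|a_{i^*}^T(X-Y)| \geq 1$ by integrality; choosing $c$ small enough that $c\|a_i\|_2^2 < 1$ for every $i$, the minimum at $i = i^*$ becomes strictly negative and one of the pair $a_{2i^*-1}, a_{2i^*}$ wins, giving $\textup{EQ}_{2n}(X,Y) = 0$.

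For the resolution bound, I would take $c = 1/(1+\max_i\|a_i\|_2^2)$: every anchor entry is then of the form $\tfrac{1}{2}\pm c\,(a_i)_j$, which has common denominator $2(1+\max_i\|a_i\|_2^2)$ and numerator of comparable magnitude, yielding $RES = O(RES(\diag(AA^T)))$ as claimed. The delicate point, and the only real obstacle, is precisely this choice of $c$: it must be small enough that the quadratic term $2c^2\|a_i\|_2^2$ is dominated by $2c|a_i^T(X-Y)|$ whenever the latter is nonzero, yet not so small that the denominator inflates beyond the stated resolution. Integrality of $a_i^T(X-Y)$, giving the lower bound $|a_i^T(X-Y)| \geq 1$, is exactly what lets a single scalar $c = \Theta(1/\max_i \|a_i\|_2^2)$ handle all separating rows uniformly, and this is what converts the EQ matrix property into the desired logarithmic-resolution NN representation.
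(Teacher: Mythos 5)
Your construction is essentially identical to the paper's: the same lifted rows $(A_i,-A_i)$, the same center $\tfrac{1}{2}\mathds{1}$ with a symmetric pair of anchors per row, and the same use of integrality to get $|A_i^T(X-Y)|\geq 1$ on some separating row; the only cosmetic difference is that you use one global scalar $c$ where the paper uses per-row constants $c_i = 1/(2\|A_i\|_2^2)$, and both choices yield the claimed resolution. The argument is correct.
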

\begin{proof}
    The anchors we construct are in the following form for $i \in \{1,\dots,m\}$. We define $W = \begin{bmatrix} A & -A\end{bmatrix}_{m\times 2n}$.
    \begin{align}
        \textcolor{blue}{a_0} &= 0.5\mathds{1} \\
        \textcolor{red}{a_{2i-1}} &= 0.5\mathds{1} + c_iW_i \\
        \textcolor{red}{a_{2i}} &= 0.5\mathds{1} - c_iW_i
    \end{align}
    We have two cases similar to the proof of Theorem \ref{th:3-anchor} to prove where $a_0$ is closest to vectors $X = Y$ and $a_i$ is closest to $X \neq Y$ for some $i \in \{1,\dots,2n\}$. We also use $(X,Y) \in \{0,1\}^{2n}$ to denote the input vector. When we expand the squared Euclidean distance, we get
    \begin{align}
        d(a_0,(X,Y))^2 &= n/2 \\
        d(a_{2i-1},(X,Y))^2 &= n/2 - 2c_iA_i^T(X-Y) + 2c_i^2||A_i||_2^2 \\
        d(a_{2i},(X,Y))^2 &= n/2 + 2c_iA_i^T(X-Y) + 2c_i^2||A_i||_2^2
    \end{align}
    Since $X-Y \in \{-1,0,1\}^n$, $A(X-Y) = 0$ if and only if $X = Y$. Therefore, if $X=Y$, $d(a_0,(X,Y))^2 < d(a_i,(X,Y))^2$ for all $i \in \{1,\dots,2n\}$ as long as $c_i > 0$.
    
    Conversely, if $X \neq Y$, then $|A_i^T(X-Y)| \geq 1$ for some $i \in \{1,\dots,n\}$. Depending on the sign of $A_i^T(X-Y)$, one of the $a_{2i-1}$ or $a_{2i}$ will be closer to $(X,Y)$. Hence, if $c_i = \frac{1}{2||A_i||_2^2}$, then
    \begin{align}
        -2c_i|A_i^T(X-Y)| + 2c_i^2||A_i||_2^2 = -\frac{|A_i^T(X-Y)|-0.5}{||A_i||_2^2} < 0
    \end{align}
    so that either $a_{2i-1}$ or $a_{i}$ will be closest to $(X,Y)$. The resolution bound is easy to verify by the selection of $c_i$s.
\end{proof}

The properties of EQ matrices directly translate into the NN representations of the EQ. We obtain the following corollary concluding our results in this context.

\begin{corollary}
    \label{cor:eq}
    For the $2n$-input EQ function, there are NN representations such that
    \begin{table}[h!]
        \centering
        \begin{tabular}{|c|c|}
            \hline
            Size & Resolution \\
            \hline
            3 & $O(n)$ \\
            \hline
            $2n+1$ & $O(1)$ \\
            \hline
            $O(n/\log{n})$ & $O(\log{n})$ \\
            \hline
        \end{tabular}
    \end{table}
\end{corollary}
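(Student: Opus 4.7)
The plan is to invoke Theorem \ref{th:eq_log} three times, each time with a different EQ matrix $A \in \mathbb{Z}^{m \times n}$ chosen so that the pair (number of rows, resolution of $\diag(AA^T)$) matches one line of the table. The first row is also immediate from Theorem \ref{th:3-anchor} applied to EQ viewed as an exact threshold function, but routing every case through Theorem \ref{th:eq_log} keeps the argument uniform.

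For the size-$3$ row, I would take the single-row matrix $A = (1,\,2,\,4,\,\ldots,\,2^{n-1})$. Uniqueness of the binary expansion forces $Ax = 0$ to have no nontrivial solution in $\{-1,0,1\}^n$, so $A$ is an EQ matrix with $m = 1$; its squared norm is $(4^n-1)/3$, with resolution $O(n)$, and Theorem \ref{th:eq_log} delivers $2m+1 = 3$ anchors. For the size-$(2n+1)$ row, I would take $A = I_n$, which is trivially an EQ matrix whose rows each have unit squared norm, giving $O(1)$ resolution and $2n+1$ anchors.

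The only nontrivial case is the third row. Setting $k = \lfloor \log_3 n \rfloor$, I would partition the $n$ columns into $\lceil n/k \rceil$ consecutive blocks of length at most $k$ and define $A$ to be block-diagonal, with the $i$-th block equal to the row vector $(1,\,3,\,9,\,\ldots,\,3^{k-1})$ supported on the $i$-th block of columns. Because a block-diagonal system decouples row-by-row, the EQ property reduces to the uniqueness of the balanced ternary representation: any $\{-1,0,1\}$ combination of $1, 3, \ldots, 3^{k-1}$ summing to zero must be identically zero. Each row then has squared norm $\sum_{j=0}^{k-1} 9^{j} = O(n^{2})$, so $\diag(AA^T)$ has resolution $O(\log n)$, and the number of rows is $m = \lceil n/k \rceil = O(n/\log n)$, yielding $O(n/\log n)$ anchors.

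I do not expect a genuine obstacle here: Theorem \ref{th:eq_log} reduces the corollary to exhibiting EQ matrices with prescribed row counts and entry magnitudes, and the only mildly delicate point is the simultaneous enforcement of polynomially bounded entries and the EQ property in the third case, which the base-$3$ block-diagonal encoding handles directly.
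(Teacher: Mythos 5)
Your proposal is correct and follows the same overall strategy as the paper: all three rows of the table are obtained by feeding a suitable EQ matrix into Theorem~\ref{th:eq_log}, with the single row of powers of two for the size-$3$ case and $A = I_n$ for the size-$(2n+1)$ case, exactly as in the paper. The only divergence is in the third row: the paper simply cites an EQ-matrix construction from prior work (Theorem~1 of the reference on algebraic EQ matrices) with $O(n/\log n)$ rows and row norms bounded by $n$, whereas you build such a matrix explicitly as a block-diagonal array of balanced-ternary rows $(1,3,\dots,3^{k-1})$ with $k = \lfloor \log_3 n\rfloor$. Your verification is sound: the block structure decouples $Ax=0$ into per-block equations, uniqueness of balanced ternary representation rules out nontrivial $\{-1,0,1\}$ solutions, each squared row norm is $(9^k-1)/8 = O(n^2)$ so $RES(\diag(AA^T)) = O(\log n)$, and $m = \lceil n/k\rceil = O(n/\log n)$. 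The trade-off is that your argument is self-contained and elementary, while the paper's citation route may yield a sharper constant in the number of rows; asymptotically the two are equivalent, and both satisfy the hypothesis of Theorem~\ref{th:eq_log} that the EQ matrix have no all-zero rows.
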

\begin{proof}
    For the first result, we use the EQ matrix with exponentially large weights (see Eq. \eqref{eq:equ_equ}) taking $m = 1$. This gives the linear resolution construction. For the second result, we use $A = I$, which is a full-rank matrix for the EQ function where $m = n$. Since each row has norm 1, this gives $O(1)$ resolution.
    Finally, for the third result, we use a result on EQ matrices in \cite{kilic2022algebraic} (see Theorem 1). In this case, the number of rows is asymptotically $O(n/\log{n})$ and each row norm is bounded by $n$, giving a logarithmic resolution.
\end{proof}

\section{Low Resolution NN Representations for COMP and OMB}

One can notice that for the COMP and OMB, the most significant bits  in the input determine the output. This is called the \textit{domination principle} \cite{kilic2021neural} and the property used implicitly in many works \cite{amano2005complexity,bohossian1998trading,kilic2022algebraic}. In order to decide the label of the anchor using the domination principle, a procedure based on a sequential decision making could be useful similar to decision lists. Therefore, loosely speaking, some anchors should be ``closer'' more often than the others.

One can see the following necessary and sufficient conditions for the COMP with the convention $X = (X_1,\dots,X_n)$ and $Y = (Y_1,\dots,Y_n)$ and binary expansion $\sum_{i=1}^n 2^{n-i}X_i$. Below, $\times$s denote \textit{don't cares}.
\begin{align}
    \label{eq:comp_1}
    X > Y &\Leftrightarrow X-Y = (0,\dots,0,\hphantom{l}1,\times,\dots,\times) \\
    X < Y &\Leftrightarrow  X-Y = (0,\dots,0,-1,\times,\dots,\times) \\
    X = Y &\Leftrightarrow  X-Y = (0,\dots,0,0,0,\dots,0)
\end{align}

The vector $X-Y$ has leading $0$s and the most significant digit determines if $X > Y$ or not. The idea to construct an NN representation for the COMP relies on this observation and we depict the idea in Figure \ref{fig:comp}. First, we consider the hyperplane $x_1 = y_1$ and put two anchors whose midpoint is $X^* = 0.5\mathds{1}$, which is on the hyperplane itself so that $a_1$ is closer to $X_1 > Y_1$ and $a_2$ is closer to $X_1 < Y_1$. Then, alongside the intersection of $x_1 = y_1$ and $x_2 = y_2$, we put two more anchors $a_3$ and $a_4$ to take care of $X_2 > Y_2$ and $X_2 < Y_2$ and we continue like this. To ensure that we have the domination principle, $a_3$ and $a_4$ are farther away from $X^*$. Since $\text{COMP}(X,Y) = 1$ when $X = Y$, we make the anchors skewed by a little amount so that $X = Y$ vectors are closer to \textcolor{blue}{blue} anchors.

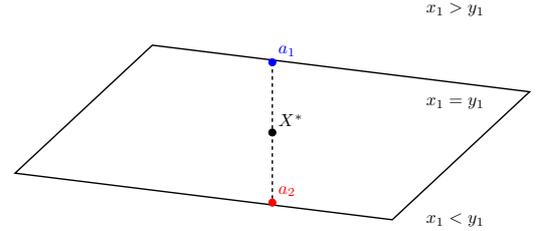
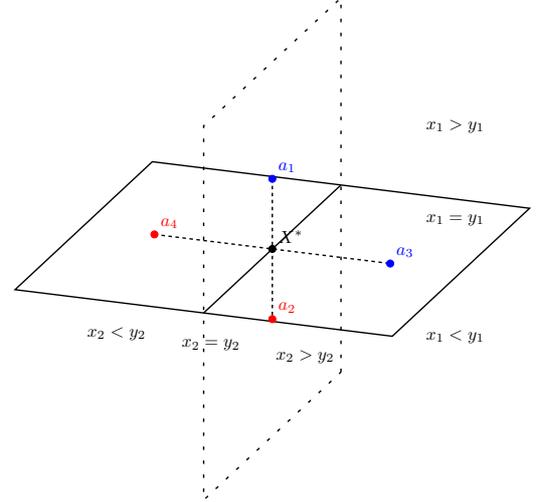
\begin{figure}[h!]
    \centering
    \begin{subfigure}{\textwidth}
    \centering
    \begin{adjustbox}{trim=0cm 2cm 0cm 1cm,clip}
    \scalebox{0.66}{
        \tdplotsetmaincoords{70}{110}
        \begin{tikzpicture}[tdplot_main_coords]
            \newcommand{\eq}{=}
            \draw[thick,-,dash pattern=on 2pt off 6pt,draw=white] (-4,0,-4) -- (-4,0,4) -- (4,0,4) -- (4,0,-4) -- cycle;
            \draw[thick,-] (-4,-4,0) -- (-4,4,0) -- (4,4,0) -- (4,-4,0) -- cycle;
            
            \coordinate (a0) at (0,0,0);

            \node[label=0:$x_1 > y_1$] (w1) at ($(a0) + (0,3,3)$) {};
            \node[label=0:$x_1 \eq y_1$] (w2) at ($(a0) + (0,3,1)$) {};
            \node[label=0:$x_1 < y_1$] (w3) at ($(a0) + (0,3,-1.5)$) {};
            
            \coordinate (X) at ($(a0)$);

            \filldraw[black] (X) circle (2pt) node[above right]  {$X^*$};

            \coordinate (a21) at ($(a0) + (0,0,1.5)$);
            
            \coordinate (a22) at ($(a0) + (0,0,-1.5)$);
            \draw[thick,-,dash pattern=on 2pt off 2pt](a21)--(a22);
            
            \filldraw[blue] (a21) circle (2pt) node[above right]  {$a_1$};
            
            \filldraw[red] (a22) circle (2pt) node[above right]  {$a_2$};
        \end{tikzpicture}
    }
    \end{adjustbox}
    \caption{The placement of anchors $a_1$ and $a_2$ for the hyperplane $x_1 = y_1$. $X^* = 0.5\mathds{1}$ for simplicity.}
    \label{sub:comp_idea_2}
    \end{subfigure}
    
    \begin{subfigure}{\textwidth}
    \centering
    \scalebox{0.66}{
        \tdplotsetmaincoords{70}{110}
        \begin{tikzpicture}[tdplot_main_coords]
            \newcommand{\eq}{=}
            \draw[thick,-,dash pattern=on 2pt off 6pt] (-4,0,-4) -- (-4,0,4) -- (4,0,4) -- (4,0,-4) -- cycle;
            \draw[thick,-] (-4,-4,0) -- (-4,4,0) -- (4,4,0) -- (4,-4,0) -- cycle;
            \draw[thick,-](-4,0,0)--(4,0,0);
            
            \coordinate (a0) at (0,0,0);

            \node[label=0:$x_1 > y_1$] (w1) at ($(a0) + (0,3,3)$) {};
            \node[label=0:$x_1 \eq y_1$] (w2) at ($(a0) + (0,3,1)$) {};
            \node[label=0:$x_1 < y_1$] (w3) at ($(a0) + (0,3,-1.5)$) {};

            \node[label=0:$x_2 > y_2$] (w4) at ($(a0) + (6,2,0)$) {};
            \node[label=0:$x_2 \eq y_2$] (w5) at ($(a0) + (6,0,0)$) {};
            \node[label=0:$x_2 < y_2$] (w6) at ($(a0) + (6,-2,0)$) {};
            
            \coordinate (X) at ($(a0)$);

            \filldraw[black] (X) circle (2pt) node[above right]  {$X^*$};

            \coordinate (a11) at ($(a0) + (0,2.5,0)$);
            
            \coordinate (a12) at ($(a0) + (0,-2.5,0)$);
            
            \draw[thick,-,dash pattern=on 2pt off 2pt](a12)--(a11);

            \coordinate (a21) at ($(a0) + (0,0,1.5)$);
            
            \coordinate (a22) at ($(a0) + (0,0,-1.5)$);
            \draw[thick,-,dash pattern=on 2pt off 2pt](a21)--(a22);
            
            \filldraw[blue] (a11) circle (2pt) node[above right]  {$a_3$};
            
            \filldraw[red] (a12) circle (2pt) node[above right]  {$a_4$};
            
            \filldraw[blue] (a21) circle (2pt) node[above right]  {$a_1$};
            
            \filldraw[red] (a22) circle (2pt) node[above right]  {$a_2$};
        \end{tikzpicture}
    }
    \caption{The placement of $a_3$ and $a_4$ for the hyperplanes $x_1 = y_1$ and $x_2 = y_2$. They are farther away from the $X^*$ compared to $a_1$ and $a_2$ to ensure the domination principle.}
    \label{sub:comp_idea_4}
    \end{subfigure}
    \caption{The construction idea for the $\text{COMP}(X,Y)$ function depicting the first two iterations. For $2n$-inputs, there will be $n$ iterations resulting in $2n$ many anchors.}
    \label{fig:comp}
\end{figure}

\begin{theorem}
    \label{th:comp}
    For the $2n$-input \textup{COMP}, there is an NN representation with $2n$ anchors and $O(\log{n})$ resolution.
\end{theorem}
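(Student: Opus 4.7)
The plan is to realize the geometric picture sketched in Figure~\ref{fig:comp} with explicit anchor coordinates and verify by direct distance computation. With $X^{*}=0.5\mathds{1}_{2n}$ and $v_{i}\in\mathbb{Z}^{2n}$ the vector with $+1$ in position $i$ and $-1$ in position $n+i$ (the normal to the hyperplane $x_{i}=y_{i}$), for each bit $i\in\{1,\dots,n\}$ I would place a blue anchor $\textcolor{blue}{a_{2i-1}}=X^{*}+c_{i}v_{i}$ and a red anchor $\textcolor{red}{a_{2i}}=X^{*}-c'_{i}v_{i}$, where the scalars $0<c_{i}<c'_{i}<1/2$ are rationals to be pinned down at the end.

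The first step is the distance calculation. Using $\|X^{*}-(X,Y)\|_{2}^{2}=n/2$, $v_{i}^{T}(X^{*}-(X,Y))=-(X_{i}-Y_{i})$, and $\|v_{i}\|_{2}^{2}=2$, the squared distances collapse to
\[
d(a_{2i-1},(X,Y))^{2}=n/2-2c_{i}(X_{i}-Y_{i})+2c_{i}^{2},
\]
\[
d(a_{2i},(X,Y))^{2}=n/2+2c'_{i}(X_{i}-Y_{i})+2(c'_{i})^{2}.
\]
Each anchor's distance depends on $(X,Y)$ only through the single coordinate difference $X_{i}-Y_{i}\in\{-1,0,1\}$, so correctness reduces to comparing values of $g(c):=c-c^{2}$ at the chosen scalars.

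Second, to enforce the domination principle I would take $c_{1}>c_{2}>\dots>c_{n}$ strictly decreasing inside the monotone-increasing branch $(0,1/2)$ of $g$ with a uniform gap, concretely $c_{i}=(n+1-i)/(2n+2)$. A case analysis on the most significant differing bit $k$ of $X-Y$ then shows that if $X>Y$ then $a_{2k-1}$ is strictly closest because $g(c_{k})>g(c_{i})$ for every less-significant $i>k$, and symmetrically $a_{2k}$ wins if $X<Y$. For $X=Y$, which must map to a blue anchor since $\text{COMP}(X,X)=1$, I would break the symmetry with a small skew $c'_{i}=c_{i}+\delta$ with $\delta=1/(4n+4)$; this makes every red anchor strictly farther from $X^{*}$ than every blue anchor, so the smallest blue distance $n/2+2c_{n}^{2}$ wins uniquely at $a_{2n-1}$.

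The main obstacle is verifying that the skew does not spoil the domination argument when $X\neq Y$. This reduces to checking $c'_{j}<c_{k}$ whenever $k<j$, and the uniform-gap choice is essential here: $\delta=1/(4n+4)$ is strictly less than the gap $c_{i-1}-c_{i}=1/(2n+2)$, so $c'_{j}=c_{j}+\delta<c_{j-1}\leq c_{k}$. This keeps $c'_{j}$ inside the increasing branch of $g$ and below every $c_{k}$ with $k<j$, hence $g(c'_{j})<g(c_{k})$ still holds and the red anchors for less-significant bits cannot upset the winner. Since every entry of every anchor is a rational with common denominator $4(n+1)$, the final resolution bound $O(\log n)$ is immediate.
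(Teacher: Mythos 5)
Your construction is essentially the paper's: pairs of anchors at $0.5\mathds{1}\pm c\,W_i$ along the normals of the hyperplanes $x_i=y_i$, with magnitudes graded across bit positions so that $c(1-c)$ is largest for the most significant bit (domination principle) and with the red magnitude slightly exceeding the blue one so that $X=Y$ falls to a blue anchor; the only difference is that you place the scalars just below $1/2$ on the increasing branch of $c-c^2$ while the paper uses $c_i=\tfrac12+\tfrac{i-1}{4n}$ just above it, and both yield denominator $O(n)$ and hence $O(\log n)$ resolution. One sentence of yours overclaims --- it is not true that every red anchor is farther from $X^*$ than every blue anchor (e.g.\ $c'_n<c_1$ for $n\ge 2$) --- but your conclusion for $X=Y$ only needs the weaker, true fact that the globally minimal scalar $c_n$ belongs to the blue anchor $a_{2n-1}$, so the argument stands.
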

\begin{proof}
    We will show that the following is an NN representation for the $\text{COMP}_{2n}$ function for $i \in \{1,\dots,n\}$ and $W = \begin{bmatrix} I & -I\end{bmatrix}_{n\times 2n}$
    \begin{align}
        X^* &= 0.5\mathds{1} \\
        \textcolor{blue}{a_{2i-1}} &= X^* + c_{2i-1}W_i \\
        \textcolor{red}{a_{2i}} &= X^* - c_{2i}W_i
    \end{align}
    for $c_i = \frac{1}{2} + \frac{i-1}{4n}$ for $i \in \{1,\dots,2n\}$. This selection gives the desired resolution bound.

    Let $(X,Y)$ denote the $2n$-dimensional input vector and let $(\mathcal{X},\mathcal{Y})^{(k)}$ be the set of vectors where $X_i = Y_i$ for $i < k$, $X_k = 1$ and $Y_k = 0$. For $i > k$, the values can be arbitrary. Clearly, $X > Y$ for any vector in $(X,Y)^{(k)} \in (\mathcal{X},\mathcal{Y})^{(k)}$ (see Eq. \eqref{eq:comp_1}). We claim that the closest anchor to any $(X,Y)^{(k)}$ is $a_{2k-1}$.
    \begin{align}
        \nonumber
        &d(a_{2i-1},(X,Y)^{(k)})^2 \\
        \nonumber
        &\hphantom{aa}=|(X,Y)^{(k)}| -2(0.5\mathds{1}+c_{2i-1}W_i)^T (X,Y)^{(k)}\\
        \nonumber
        &\hphantom{aaaa}+ ||0.5\mathds{1}+c_{2i-1}W_i||_2^2\\
        \nonumber
        &\hphantom{aa}= -2c_{2i-1}(X_i - Y_i) + \frac{n}{2} + 2c_{2i-1}^2||W_i||_2^2 \\
        &\hphantom{aa}= \frac{n}{2} + \frac{1}{2} + \frac{i-1}{n} + 2\Big(\frac{i-1}{2n}\Big)^2 -\Big(1+\frac{i-1}{n}\Big)(X_i - Y_i)
    \end{align}
    Since $-\frac{1}{2} + 2\Big(\frac{k-1}{2n}\Big)^2 < 0 < \frac{1}{2} + \frac{i-1}{n} + 2\Big(\frac{i-1}{2n}\Big)^2$ for any $i < k$, $a_{2k-1}$ is closer to $(X,Y)^{(k)}$ than $a_{2i-1}$ for $i < k$. For $i > k$, the smallest distance value is attained when $X_i - Y_i = 1$. For this, we see that $-\frac{1}{2} + 2\Big(\frac{k-1}{2n}\Big)^2 < -\frac{1}{2} + 2\Big(\frac{i-1}{2n}\Big)^2$ for $i > k$ since this expression is monotonically increasing.

    We now look at the distance between $(X,Y)^{(k)}$ and $a_{2i}$s.
    \begin{align}
        \nonumber
        d(a_{2i},(X,Y)^{(k)})^2 &= \frac{n}{2} + \frac{1}{2} + \frac{2i-1}{2n} + 2\Big(\frac{2i-1}{4n}\Big)^2 \\
        &\hphantom{aaa} +\Big(1+\frac{2i-1}{2n}\Big)(x_i - y_i)
    \end{align}
    When we compare $d(a_{2i-1},(X,Y)^{(k)})^2$ and $d(a_{2i},(X,Y)^{(k)})^2$ for $i < k$, we have $-\frac{1}{2} + 2\Big(\frac{k-1}{2n}\Big)^2 < 0 < \frac{1}{2} + \frac{2i-1}{2n} + 2\Big(\frac{2i-1}{4n}\Big)^2$. Also, for $i = k$, $d(a_{2k-1},(X,Y)^{(k)}) < d(a_{2k},(X,Y)^{(k)})$ because $x_k - y_k > 0$. For $i > k$, the minimum distance value is attained when $x_i - y_i = -1$, so we obtain $-\frac{1}{2} + \Big(\frac{2k-2}{4n}\Big)^2 < -\frac{1}{2} + \Big(\frac{2i-1}{4n}\Big)^2$. Hence, we conclude that $a_{2i-1}$ is closer to $(X,Y)^{(k)}$ than $a_{2i}$ for any $i$.
    
    The proof is similar for $X < Y$. For $X = Y$, we see that $a_1$ is always the closest anchor with the correct label because $c_1$ is the smallest. Hence, the construction works indeed.
\end{proof}

We finish this section with the construction for the OMB. Note that in this context, we use $X = (X_1,\dots,X_n)$ with $X_1$ being the most significant bit.

\begin{theorem}
    \label{th:omb}
    For the $n$-input \textup{OMB}, there is an NN representation with $n+1$ anchors with $O(\log{n})$ resolution.
\end{theorem}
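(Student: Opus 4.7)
The plan is to adapt the COMP construction---a base anchor at $X^* = \tfrac{1}{2}\mathds{1}$ surrounded by offset anchors whose magnitudes enforce a domination principle---to the single-argument function OMB. Since OMB has no partner vector $Y$, each candidate position $k$ of the leftmost one should contribute one anchor $a_k = \tfrac{1}{2}\mathds{1} + c_k e_k$, labeled blue when $k$ is odd and red when $k$ is even so as to match $\text{OMB}(X)$. Because $\text{OMB}(0) = 0$ while every such $a_k$ is biased toward a nonzero bit, I would add one extra red anchor $a_0 = (\tfrac{1}{2} - \delta)\mathds{1}$ to win on the all-zero input, giving exactly $n+1$ anchors. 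Concretely I would take $c_k = (2n-k)/(4n)$, so the $c_k$'s are strictly decreasing and lie in $(\tfrac{1}{4}, \tfrac{1}{2})$, and $\delta = 1/(8n)$; every entry then has denominator dividing $8n$, giving resolution $O(\log n)$.

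Exploiting $(\tfrac{1}{2} - X_j)^2 = \tfrac{1}{4}$ for binary $X_j$, direct expansion yields the compact formulas
\begin{align*}
d(a_k, X)^2 &= \tfrac{n}{4} + c_k^2 + c_k(1 - 2X_k),\\
d(a_0, X)^2 &= \tfrac{n}{4} - n\delta(1-\delta) + 2\delta |X|.
\end{align*}
The verification splits into two cases. When $X = 0$, one checks $d(a_0, 0)^2 < n/4 < d(a_k, 0)^2$ for every $k$, so the red anchor $a_0$ is uniquely closest and matches $\text{OMB}(0)$. When the leftmost one of $X$ sits at position $k$, $d(a_k, X)^2 = n/4 - c_k(1 - c_k)$, and the comparisons among the $a_j$'s mirror the COMP argument: for $j < k$, $X_j = 0$ forces $d(a_j, X)^2 > n/4 > d(a_k, X)^2$; for $j > k$, the worst case $X_j = 1$ reduces $d(a_j, X)^2 > d(a_k, X)^2$ to $(c_k - c_j)(1 - c_k - c_j) > 0$, which holds because $c_k > c_j$ and $c_k + c_j < 1$.

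The main obstacle, and the reason $a_0$ requires a carefully tuned $\delta$, is the extra comparison $d(a_0, X)^2 > d(a_k, X)^2$ for every nonzero $X$, which was absent in the COMP argument. The slack $c_k(1 - c_k) - n\delta(1-\delta) + 2\delta|X|$ is tightest at $k = n$ and $|X| = 1$, where $c_n(1-c_n) = 3/16$ and the slack becomes $3/16 - \tfrac{1}{8}(1 - 1/(8n)) + 1/(4n) > 0$. Hence $a_k$ is uniquely closest for every nonzero $X$, its label $\mathds{1}\{k \text{ is odd}\}$ matches $\text{OMB}(X)$, and the resulting NN representation has size $n+1$ and resolution $O(\log n)$.
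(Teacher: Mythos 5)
Your construction is correct, and the verification goes through: the identity $(\tfrac12 - X_j)^2 = \tfrac14$ gives the stated distance formulas, the comparisons among $a_1,\dots,a_n$ reduce to $(c_k - c_j)(1 - c_k - c_j) > 0$ exactly as you say (valid since the $c_k$ are strictly decreasing and bounded by $\tfrac12$), and the worst-case slack against $a_0$ at $k=n$, $|X|=1$ is indeed $\tfrac{1}{16} + \tfrac{1}{64n} + \tfrac{1}{4n} > 0$ with $\delta = \tfrac{1}{8n}$. The shared idea with the paper is the domination principle --- one anchor per candidate position of the leftmost one, with monotonically varying magnitudes so that earlier positions win, plus one extra red anchor for the all-zero input --- but the concrete constructions differ. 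You port the COMP machinery wholesale: dense anchors $\tfrac12\mathds{1} + c_k e_k$ around the cube center, which forces you to introduce and tune the shrunken center $(\tfrac12-\delta)\mathds{1}$ so that it beats every $a_k$ only on $X=0$; this is the one genuinely new comparison and you handle it correctly. The paper instead uses sparse anchors $a_i = \bigl(1 - \tfrac{i-1}{n}\bigr)e_i$ together with the origin as the $(n+1)$-st anchor; there the cross-term $-2a_i^TX$ vanishes unless $X_i = 1$, the zero anchor needs no tuning parameter at all, and the whole argument collapses to the single observation that $-1 + \bigl(\tfrac{i-1}{n}\bigr)^2$ is negative and increasing in $i$. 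Your version buys uniformity with the COMP proof at the cost of an extra parameter and an extra case; the paper's buys a shorter verification at the cost of a less obviously systematic anchor placement. (One cosmetic slip: $c_n = \tfrac14$ exactly, so the $c_k$ lie in $[\tfrac14,\tfrac12)$ rather than the open interval, which changes nothing in the argument.)
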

\begin{proof}
    The construction we give here is very simple, which can be thought as a special case of a wider family of constructions that we do not cover here. We have the anchor matrix $A \in \mathbb{Q}^{(n+1)\times n}$ with resolution $O(\log{n})$ for $i \in \{1,\dots,n+1\}$ and $j \in \{1,\dots,n\}$.
    \begin{align}
        a_{ij} = \begin{cases}
            1-\frac{i-1}{n} &\text{ if } i = j \\
            0 &\text{ otherwise}
        \end{cases}
    \end{align}
    The label of $a_i$ is \textcolor{blue}{blue} for odd $i$ and \textcolor{red}{\textbf{red}} for even $i$. Moreover, the last anchor $a_{n+1}$ corresponds to the all-zero vector and it is labeled \textcolor{red}{\textbf{red}} because $\text{OMB}(0) = 0$ by definition.
    
    By $\mathcal{X}^{(k)}$, we denote the family of vectors where $X_j = 0$ necessarily for $j < k$ and $x_k = 1$. That is, for any $X \in \mathcal{X}^{(k)}$ $X = (0,\dots,0,1,\times,\dots,\times)$. For $j > k$, we have \textit{don't cares}.

    We claim that all $X \in \mathcal{X}^{(k)}$ are closest to $a_i$. By expanding the squared Euclidean distance, we have
    \begin{align}
        &d(a_i,X)^2 = |X| - 2a_i^T X + ||a_i||^2 \\
         &\hphantom{aa}= \begin{cases}
            |X| + \Big(1 - \frac{i-1}{n}\Big)^2 &\text{ if } i < k \\
            |X| - 2\Big(1 - \frac{i-1}{n}\Big)X_{i} + \Big(1 - \frac{i-1}{n}\Big)^2 &\text{ if } k \leq i \leq n\\
            |X| &\text{ if } i = n+1
        \end{cases}
    \end{align}
    Since $X_k = 1$ and $-2\Big(1 - \frac{i-1}{n}\Big) + \Big(1 - \frac{i-1}{n}\Big)^2 < 0 < \Big(1 - \frac{i-1}{n}\Big)^2$, $a_k$ is closer to $X$ than $a_i$s for $i < k$. Since $-2\Big(1 - \frac{i-1}{n}\Big) + \Big(1 - \frac{i-1}{n}\Big)^2 = -1 + \Big(\frac{i-1}{n}\Big)^2$ is monotone increasing in $i$ and always negative, the closest anchor is $a_k$ indeed. This concludes the proof.
\end{proof}

We remark that the inclusion of the all-zero anchor for the NN representation of OMB is not necessary when $n$ is even because the label of $a_n$ and $a_{n+1}$. In this case, the representation size is exactly $n$.

\section{Conclusion}

NN representations are promising models in the understanding of how the concepts in the brain are represented. In this regard, NN representations of threshold functions are analyzed and the trade-off between the resolution of the anchors and the number of anchors is studied. The main contributions in the paper include characterizing the NN representations of exact threshold functions as well as polynomial size constructions of NN representations with low resolution for the EQ, COMP and OMB functions. In contrast, these functions require high resolution with a constant number of anchors. Whether low resolution NN representations for any threshold function with polynomially large number of anchors in $n$ exist is still an intriguing open problem and we hope that our current contributions will inspire future progress.

\clearpage
\printbibliography

\end{document}